\newcommand{\field}[1]{\mathbb{#1}}
\newcommand{\GFZ}{\ensuremath{\field{F}_2}\xspace}
\newcommand{\ord}[1]{\ensuremath{\mathcal{O}\!\left(#1\right)}}
\newcommand{\kbar}{\ensuremath{\overline{k}}\xspace}
\newcommand{\rbar}{\ensuremath{\overline{r}}\xspace}
\newcommand{\Magma}{\textsc{Magma}\xspace}
\newcommand{\xBG}{\textsf{x86\_64}\xspace}
\newcommand{\Opteron}{\textsf{Opteron}\xspace}
\newcommand{\CTD}{\textsf{Core 2 Duo}\xspace}
\newcommand{\CT}{\textsf{Core 2}\xspace}
\newcommand{\Xeon}{\textsf{Xeon}\xspace}
\newcommand{\PLEI}{Algorithm~\ref{alg:ple_iterative}\xspace}
\newcommand{\addrowsfromtable}{\textsc{AddRowsFromTable}}
\newcommand{\maketable}{\textsc{Make\-Table}}
\newcommand{\gausssubmatrix}{\textsc{Gauss\-Submatrix}}
\newcommand{\ple}{\textsc{Ple}\xspace}
\newcommand{\partialple}{\textsc{PartialPle}\xspace}
\newtheorem{definition}{Definition}
\newtheorem{lemma}{Lemma}
\title{Efficient Dense Gaussian Elimination over the Finite Field with Two Elements}
\author{MARTIN R.\ ALBRECHT\\INRIA, Paris-Rocquencourt Center, SALSA Project
UPMC Univ Paris 06, UMR 7606, LIP6, F-75005, Paris, France; CNRS, UMR 7606, LIP6, F-75005, Paris, France\\ GREGORY V.\ BARD\\ University of Wisconsin-Stout, Menomonie, Wisconsin 54751, USA \and CL{\'E}MENT PERNET\\INRIA-MOAIS LIG, Grenoble Univ. ENSIMAG, Antenne de Montbonnot 51, avenue Jean Kuntzmann, F-38330 MONTBONNOT SAINT MARTIN, France
}
\keywords{GF(2), RREF, Linear Algebra, Gaussian Elimination, Matrix Decomposition, Method of Four Russians}
\begin{document}

\begin{abstract}
In this work we describe an efficient implementation of a hierarchy of algorithms for Gaussian elimination upon dense matrices over the field with two elements (\GFZ). We discuss both well-known and new algorithms as well as our implementations in the M4RI library, which has been adopted into SAGE. The focus of our discussion is a block iterative algorithm for PLE decomposition which is inspired by the M4RI algorithm. The implementation presented in this work provides considerable performance gains in practice when compared to the previously fastest implementation. We provide performance figures on \xBG CPUs to demonstrate the alacrity of our approach.
\end{abstract}

\maketitle

\section{Introduction}
We describe an efficient implementation of a hierarchy of algorithms for Gaussian elimination of dense matrices over the field with two elements (\GFZ). In particular, we describe algorithms for PLE decomposition \cite{jeannerod-pernet-storjohann:ple2010} which decompose a matrix $A$ into $P\cdot L \cdot E$ where $P$ is a permutation matrix, $L$ is a unit lower triangular matrix and $E$ a matrix in row echelon form. Since matrix decomposition is an essential building block for solving dense systems of linear and non-linear equations \cite{lazard:eurocal83} much research has been devoted to improving the asymptotic complexity of such algorithms. A line of research which produced decompositions such as PLUQ~\cite{GoVa96}, LQUP~\cite{IbMoHu82}, LSP~\cite{IbMoHu82} and PLE~\cite{jeannerod-pernet-storjohann:ple2010}, among others. Each of those decompositions can be reduced to matrix-matrix multiplication and hence can be achieved in  $\ord{n^\omega}$ time where $\omega$ is the complexity exponent of linear algebra\footnote{For practical purposes we set $\omega = 2.807$. Lower exponents have been obtained in theory \cite{CW90}.}. However, in \cite{jeannerod-pernet-storjohann:ple2010} it was shown that many of these decompositions are essentially equivalent (one can be produced from the other in negligible cost) and that for almost all applications the PLE decomposition is at least as efficient in terms of time and memory as any of the other. Hence, in this work, we focus on the PLE decomposition but consider the special case of \GFZ. In particular, we propose a new algorithm for block-iterative PLE decomposition and discuss its implementation. We also describe our implementation of previously known algorithms in the M4RI library \cite{M4RI}. 

This article. is organised as follows. We will first highlight some computer architecture issues and discuss in place methods, as well as define notation in the remainder of this section. We will start by giving the definitions of reduced row echelon forms (RREF) and PLE decomposition in Section~\ref{sec:definitions}. We will then discuss the three algorithms and their implementation issues for performing PLE decomposition in Section~\ref{sec:algorithms}, and other implementation issues in Section~\ref{sec:implementation}. We conclude by giving empirical evidence of the viability of our approach in Section~\ref{sec:results}. In Appendix~\ref{app:m4ri} we will describe the original M4RI algorithm (for reducing \GFZ matrices into either row echelon form or RREF), which initiated this line of research.

\subsection{Computer Architecture Issues}

The M4RI library implements dense linear algebra over $\GFZ$. Our implementation focuses on 64-bit x86 architectures (\xBG), specifically the Intel \CT and the AMD \Opteron. Thus, we assume in this work that native CPU words have 64 bits. However, it should be noted that our code also runs on 32-bit CPUs and on non-\textsf{x86} CPUs such as the PowerPC.

Element-wise operations over $\GFZ$, being mathematically trivial, are relatively cheap compared to memory access. In fact, in this work we demonstrate that the two fastest implementations for dense matrix decomposition over $\GFZ$ (the one presented in this work and the one found in \Magma~\cite{magma} due to Allan Steel) perform worse for moderately sparse matrices despite the fact that fewer field operations are performed. Thus our work provides further evidence that more refined models for estimating the expected efficiency of linear algebra algorithms on modern CPUs and their memory architectures are needed.

An example of the CPU-specific issues would be the ``SSE2 instruction set,'' which we will have cause to mention in Section~\ref{say_sse2}. This instruction set has a command to XOR (exclusive-OR) two 128-bit vectors, with a single instruction. Thus if we are adding a pair of rows of $128n$ bits, we can do that with $n$ instructions. This is a tremendous improvement on the $128n$
floating-point operations required to do that with real-numbered problems. This also means that any algorithm which touches individual bits instead of blocks of 128 bits will run about two orders of magnitude slower than an algorithm which avoids this.

\subsection{Notations and space efficient representations} \label{sec:notation}
By $A[i,j]$ we denote the entry of the matrix $A$ at row $i$ and column $j$. By $\ell_i$ we denote the $i$th element of the vector $\ell$. We represent $m \times m$ permutation matrices as integer vectors of length $m$, i.e., in LAPACK-style. That is to say, the permutation matrix 
$$\left[ \begin{array}{cccccc}
1 & 0 & 0 & 0 & 0 \\
0 & 0 & 1 & 0 & 0 \\
0 & 1 & 0 & 0 & 0 \\
0 & 0 & 0 & 0 & 1 \\
0 & 0 & 0 & 1 & 0 
\end{array} \right]$$
is stored as $P=[0, 2, 2, 4, 4]$, where for each index $i$ the entry $P_i$ encodes which swap ought to be performed on the input. This representation allows to apply permutations in-place.

All indices start counting at zero, i.e. $A[2,2]$ refers to the intersection of the third row and third column. All algorithms, lemmas, theorems, propositions in this work are presented for the special case of \GFZ. Hence, any statement that some algorithm produces a certain output is always understood to apply to the special case of \GFZ,
unless stated otherwise.

Matrix triangular decompositions allow that the lower triangular and the upper triangular matrices can be stored one above the other in the same amount of memory as for the input matrix: their non trivial coefficients occupy disjoint areas, as the diagonal of one of them has to be unit and can therefore be omitted. For two $m\times n$ matrices $L$ and $U$ such that $L=[\ell_{i,j}]$ is unit lower triangular and $U = [u_{i,j}]$ is upper triangular, we shall use the notation $L\setminus U$, following~\cite{jeannerod-pernet-storjohann:ple2010} to express the fact that $L$ and $U$ are stored one next to the other within the same $m\times n$ matrix. Thus, $A = L\setminus U$ is the $m \times n$ matrix $A = [a_{i,j}]$ such that  $a_{i,j} = \ell_{i,j}$ if $i>j$, and $a_{i,j} = u_{i,j}$ otherwise. We call it a space sharing representation of $L$ and $U$. Furthermore, $A = A_0 \mid A_1$ means that $A$ is obtained by augmenting $A_0$ by $A_1$.

The methods of this paper not only produce space-sharing triangular decompositions, but also overwrite their input matrix with the output triangular decomposition. However, they are not in-place, contrarily to the general algorithms in~\cite{jeannerod-pernet-storjohann:ple2010}, as they use tabulated multiplications that require non constant extra memory allocations.

Last but not least, complexity is always reported for square matrices, but the complexities for over-defined and under-defined matrices can be derived with moderate effort.

\section{Matrix Decompositions}
\label{sec:definitions}

In this work we are interested in matrix decompositions that do not destroy the column profile, since those play a crucial role in linear algebra for Gr\"obner basis computations \cite{f4}\footnote{These  are also useful in simplified variants of the F4 algorithm for polynomial system solving such as the popular \cite{courtois-klimov-patarin-shamir:eurocrypt2000}.}.

\begin{definition}[Column rank profile] The column rank profile of a matrix $A$ is the lexicographically smallest sequence of column indices $i_0 \leq \dots < i_{r}$ , with $r$ the rank of $A$ such that the corresponding columns are linearly independent.
\end{definition}

\begin{definition}[Leading Entry] The leading entry in a non-zero row of a matrix is the leftmost non-zero entry in that row.\end{definition}

Perhaps, the most well-known of such decompositions is the row echleon form.

\begin{definition}[Row Echelon Form] An $m \times n$ matrix $A$ is in row echelon form if any all-zero rows are grouped together at the bottom of the matrix, and if the leading entry of each non-zero row is located to the right of the leading entry of the previous row.
\end{definition}

While row echelon forms are not unique, \emph{reduced} row echelon forms, whose definition we reproduce below, are.

\begin{definition}[Reduced Row Echelon Form]
An $m \times n$ matrix $A$ is in reduced row echelon form if it is in row echelon form, and furthermore, each leading entry of a non-zero row is the only non-zero element in its column.
\end{definition}

In order to compute (reduced) row echelon forms, we compute the $PLE$ decomposition.

\begin{definition}[PLE Decomposition]
Let $A$ be an $m\times n$ matrix over a field $\field{F}$. A PLE decomposition
of $A$ is a triple of matrices $P,L$ and $E$ such that $P$ is an $m\times m$
permutation matrix, $L$ is an $m\times m$ unit lower triangular matrix, and $E$ is an $m\times n$ matrix in row-echelon form, and $A=PLE$.
\end{definition}

We refer the reader to \cite{jeannerod-pernet-storjohann:ple2010} for a proof that any matrix has a PLE decomposition. There is also proven that the PLE decomposition is at least as efficient for almost all applications as other column rank profile revealing decompositions in terms of space and time complexity.

One last technicality stems from the fact that a computer's memory is addressed in a one-dimensional fashion, while matrices are two-dimensional objects. The methods of this work use row major representations, not column major representations. The definition is as follows:

\begin{definition}[Row Major Representation]
A computer program stores a matrix in row major representation if the entire first row is stored in memory, followed by the entire second row, followed by the
entire third row, and so forth.
\end{definition}

\section{Algorithms}
\label{sec:algorithms}

The PLE decomposition can be computed in sub-cubic time complexity by a block recursive algorithm reducing most algebraic operations to matrix multiplication and some other much cheaper operations \cite{jeannerod-pernet-storjohann:ple2010}.
We denote this algorithm as ``\emph{block recursive} PLE decomposition'' in this work. In principle this recursion could continue until $1\times1$ matrices are reached as a base case. However, in order to achieve optimal performance, these recursive calls must at some dimension ``cross over'' to a base case implementation, presumably much larger than $1\times 1$. This base case has  asymptotic complexity worse than the calling algorithm but better performance, in practice, for small dimensions. 

In this work, we also focus on optimising this base case.  This is done in two phases: a block iterative algorithm using tabulated updates (based on the ``Method of four Russians'' inversion algorithm \cite{Ba06}) denoted by \textsc{BlockIterativePLE} on top of a lazy iterative partial decomposition  which we denote \textsc{PartialPLE}. Here, lazy has a technical meaning which we
define as follows. For some entry $A[i,j]$ below the main diagonal, i.e. with $i>j$, we do not modify row $i$ in such a way as to make $A[i,j]=0$ until the last possible moment. As it comes to pass, that moment will be when we check to see if $A[i,i]=0$, to see if it would be a suitable pivot element.

\subsection{Partial iterative algorithm}
\label{sec:algorithms-classical}

This algorithm is used as a base case of the block iterative algorithm of section~\ref{sec:algorithms-iterative}. It is based on the classical $\ord{n^3}$ Gaussian elimination algorithm, where pivots are found by searching column-wise: the algorithm only moves to the next column when all rows positions of the current column have been found to be zero. This condition ensures that the $E$
matrix will be in row echelon form and the column indices of the leading entries of each row (i.e. the pivots) will indicate the column rank profile of the input matrix.

To improve cache efficiency, the updates are done in a ``lazy'' manner, inspired by the left looking elimination strategy: updates are delayed to the moment when the algorithm actually needs to check whether a coefficient is non-zero. At that moment, all trailing coefficients of the current row are also updated, as they are in the same cache page. This process requires the maintenance of a table $s=[s_0,\dots,s_{r-1}]$ storing for each pivot column the highest-numbered row considered so far. It is defined by $s_i=\max\{i_j,j=0\dots i\}$, where $i_j$ is the row number where the $j$th column's pivot has been found. Consequently, the sequence $s_i$ is monotonically decreasing. As a result, the algorithm produces the PLE decomposition of only $s=\max\{i_j,j=0\dots i\}$ rows of the input matrix. We therefore call it a ``lazy'' iterative partial PLE decomposition.  The complete algorithm is shown as Algorithm~\ref{alg:ple_gauss} and the key idea is illustrated in Figure~\ref{fig:iterativeple}. 

\begin{algorithm}
\KwIn{$A$ -- an $m \times n$ matrix}
\KwResult{Returns $r$, the rank of $A$, a permutation vector $P$ of size $s$, a vector
  $Q$ of size $r$ and an integer $s$ with $r\leq s \leq m$}
\KwEnsures{$A \leftarrow 
  \begin{bmatrix}
    L\setminus E\\
    A_1
  \end{bmatrix}$ where $L$ is $s\times r$ unit lower triangular, $E$ is an $r\times
  n $ echelon form, $A=  \begin{bmatrix}A_0\\A_1  \end{bmatrix}$, $PLE=A_0$ and $\text{rank}(A_0)=r$.}

\SetKw{KwAnd}{and}
\SetKw{KwBreak}{break}

\Begin{
    $r, c,  \leftarrow 0,0$\;
  \tcp{$n$ dim.\ array, s.t.\ column $i$
      has been updated to row $s_i$.}
    $s\longleftarrow [0,\dots, 0]$\; 
  \While{$r<m$ \KwAnd $c < n$} {
    $\text{found} \longleftarrow \texttt{False}$\;
    \For(\tcp*[h]{search for some pivot}){$c \leq j < n$}{
      \For{$r \leq i < m$}{
        \If{$r>0$ \KwAnd $i > s_{r-1}$}{
          Let $k\leq r-1$ be such that $s_{k-1}\geq i > s_{k}$\;
          \For(\tcp*[h]{Update row $i$ from column $j$ to $n$}){$k \leq l < r$}{
            \lIf{$A[i,Q_l]\neq 0$}{ add the row $l$ to the row $i$ starting at
              column $j$\;}
            $s_l\longleftarrow i$\;
          }
        }
        \lIf{$A[i,j]\neq 0$}{
          $\text{found} \leftarrow \texttt{True}$ \KwAnd
          \KwBreak;
        }
      }
      \lIf{\text{found}}{\KwBreak;}
    }
    \eIf{\text{found}}{
      $P_r, Q_r \longleftarrow i,j$\;
      swap the rows $r$ and $i$\;      
      $r, c \longleftarrow r+1,j+1$\;
    }{$j\longleftarrow j+1$\;}
  }
  $s_{max}\longleftarrow \max(s_i)$\;
  \For{$0\leq j < r$}{
    \For{$s_j+1\leq i \leq s_{max}$}{
      \If{$A[i,Q_j]\neq 0$}{ add the row $j$ to the row $i$ starting at
              column $j$\;}
    }
  }
  \tcp{Now compress $L$}
  \lFor{$0 \leq j < r$}{
   swap the columns $j$ and $Q_j$ starting at row $j$\;
  }
  \Return{$P,r,Q,s$};
}
\caption{\partialple: lazy iterative partial PLE Decomposition}
\label{alg:ple_gauss}
\end{algorithm}

\begin{figure}
\begin{center}
  \input{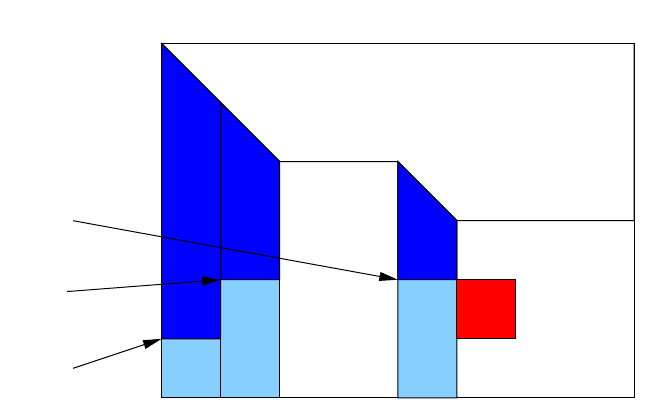_t}
\end{center}
\caption{Lazy PLE elimination: the matrix has been processed through row $r$ and
  column $c$. In order to determine if the coefficient at position $(i,j)$ is
  non zero, the algorithm must apply updates of the two last non zero columns as
  only $s_0$ is greater than $i$ and hence the update of its corresponding
  column was already applied.}
\label{fig:iterativeple}
\end{figure}

As a specialized instance of standard Gaussian elimination, algorithm~\ref{alg:ple_gauss} computes a partial PLE decomposition of an $m\times n$ matrix of rank $r$ in $\ord{mnr}$ operations over $\GFZ$. For the square case, this is obviously $\ord{n^2r}$.

\subsection{Block Iterative}
\label{sec:algorithms-iterative}

For matrices over \GFZ the ``Method of the Four Russians'' Inversion (M4RI), performs Gaussian elimination on a random $n \times n$ matrix in $\ord{n^3/\log n}$. The M4RI algorithm was introduced in \cite{Ba06} and later described in \cite{Ba07} and \cite[Ch. 9]{monograph}. It inherits its name and main idea from ``Konrod's Method for matrix-matrix multiplication,'' which is often mistakenly called ``The Method of Four Russians'' despite the fact that none of the authors of that paper are actually Russians \cite{ArDiKrFa70,AhHoUl74}. For the key idea behind the M4RI algorithm, assume that we have computed the row echelon form of an $k \times n$ submatrix of some matrix $A$ and that this submatrix has rank $k$. Then, in order to eliminate below, we may use a table of all $2^k$ linear combinations of the $k$ pivot rows found in the previous step. Hence, with only one addition per row, $k$ columns are processed. Furthermore, the construction of all $2^k$ linear combinations of the pivots can be accomplished in only $2^k$ vector additions saving a factor of $k$ compared to naive table construction. Finally, if we set $k \approx \log n$ this algorithm can achieve $\ord{n^3/\log n}$ under the assumption that it always finds pivots among the first $c \cdot k$ rows it considers where $c$ is a small constant. If this condition is not met it either fails (see also \cite{Ba07}), or regresses to cubic complexity (this is the strategy is implemented in the M4RI library). We refer the reader to Appendix~\ref{app:m4ri} for a more in-depth treatment of the M4RI algorithm. 

Our implementation of block-iterative PLE decomposition is based on our implementation of the M4RI algorithm and the performance reported in Section~\ref{sec:results} is partly due to techniques inherited from our fast implementation of the M4RI algorithm. Thus, we point out that our implementation of the M4RI algorithm is very efficient, allowing it to beat asymptotically fast implementations of Gaussian elimination over $\GFZ$ up to matrix dimensions of several thousand despite having a worse asymptotic complexity. (See also Appendix~\ref{app:m4ri}.)

\PLEI, which can be seen as the PLE analog of the M4RI algorithm\footnote{Note that \cite{albrecht:phd2010} and earlier versions of this work constain an algorithm denoted MMPF. As it turns out, the MMPF algorithm suffers from the same performance regression as M4RI, in the case when pivots cannot be found among the first $c\cdot k$ considered rows. On the other hand, unlike M4RI, it does compute a PLE decomposition.}, does not require pivots among the first $c\cdot k$ rows in order to achieve $\ord{n^3/\log n}$.  

Informally, the algorithm proceeds as follows. We divide the input matrix $A$ into vertical stripes of $k \approx \log n$ columns each. Then, we perform PLE decomposition on the first such stripe. This operation is cheap because  these stripes are narrow. Now, in order to update the matrix to the right of the current stripe we use the M4RM matrix multiplication algorithm \cite{matmulgf2}, i.e., we compute all possible linear combinations of the pivot rows just discovered and use the rows of the transformation matrix $L$ as a indices in this table. Hence, for each row of $A_{22}$ we peform one row addition from the table instead of $\ord{k}$ row additions. Thus, regardless of the rank of the currently considered stripe, we achieve $\ord{n^3 /\log n}$ because the update on the right is performed using M4RM. 

Note, however, that our presentation of \PLEI is independent of the choice of matrix-multiplication algorithm for clarity of presentation. Thus it could (in principle) be instantiated with naive multiplication, and that would result in a complexity of $\ord{n^3}$.

However, performing PLE decomposition on the current slice has an important disadvantage when compared with the M4RI algorithm: it is less cache friendly. Recall, that our matrices are stored in row major representation. PLE decomposition will always touch all rows when treating the current slice. This is cheap when considering field operations only. However, if the matrix $A$ is sufficient large, this will result in a load of a new cache line for each considered row. To mitigate this effect, we use \partialple as defined in the last section.

Every call to \partialple returns some $\rbar$ and some $s$, which are the dimensions of the $L$ matrix calculated by the
\partialple call. Recall, the dimensions of $A$ are $m\times n$ and $k$ is the parameter of
our algorithm.
Assume that a call to \partialple has returned with $\rbar=k$ and some $s < m$ and hence that we can decompose our matrix as follows
\[A = \left(
\begin{array}{ccccccc}
&A_{00}&&A_{01}&&A_{02}&\\
 &A_{10}&&A_{11} = L_{11} \setminus E_{11} &&\ A_{12} &  \\
 \phantom{blablabla}& A_{20}& \phantom{blablabla}&A_{21} = L_{21}\ & \phantom{blablabla}&A_{22}&\phantom{blablabla}\\
&A_{30}&&A_{31} &\ &A_{32}&\\
\end{array}\right) 
\]
where $A_{11}$ has $r$ rows and $A_{21}$ has $s - r$ rows. We note that if $\rbar < k$ we have $s = m$ and hence $A_{31}$ a matrix with zero rows, i.e., we only need to update $A_{22}$ which reduces to matrix multiplication as discussed above. Hence, we may focus on the $\rbar = k$ case in our discussion.

Algorithm~\ref{alg:ple_iterative} proceeds by first computing $A_{12} \gets (L_{11})^{-1} \cdot A_{12}$. We then construct a table $T$ of all $2^k$ linear combinations of the rows of $A_{12}$. Using Gray codes or similar techniques these tables can be constructed using $2^k$ vector additions only \cite{matmulgf2}. Using this table, we can implement the map $v \rightarrow v \cdot A_{12}$ for all $2^k$ possible vectors $v$ as the table lookup $T(v)$. Furthmore, we can implement the map $v \rightarrow v \cdot (E_{11}^{-1} \cdot A_{12})$ as $(v \cdot E_{11}^{-1}) \cdot A_{12}$. That is, we store a table $T'$ of all $2^k$ possible pairs $(v,v \cdot E_{11}^{-1})$ and compute $v \rightarrow v \cdot E_{11}^{-1} \cdot A_{12}$ by the table lookup $T(T'(v))$.

Finally, we ``postprocess'' $T$ such that we also update $L$ when computing $v \cdot (E_{11})^{-1}$, i.e., we replace any row $v$ of $A_{31}$ by $v \cdot E_{11}^{-1}$ when adding $(v \cdot E_{11}^{-1}) \cdot A_{12}$ to a row of $A_{32}$. For example, assume $\rbar = 3 $ and that the first row of the $\rbar \times n$ submatrix $A_{11}$ has the first $\rbar$ bits equal to \texttt{[1 0 1]} ($=5$). Assume further that we want to clear $\rbar$ bits of a row which also starts with \texttt{[1 0 1]}. Then -- in order to generate $L$ -- we need to encode that this row is cleared by adding the first row only, i.e., we want the first $\rbar = 3$ bits to be \texttt{[1 0 0]}. Thus, to correct the table, we add -- in the postprocessing phase of $T$ -- \texttt{[1 0 0]} to the first $\rbar$ bits $T(5)$. Hence, when adding this row of $T$ to $A_{31}$, we produce \texttt{[1 0 1] $\oplus$ [1 0 1] $\oplus$ [1 0 0] $=$ [1 0 0]} and store it in $A_{31}$.

Using $T$ and $T'$, we can now perform the operations
\begin{itemize}
 \item[] $A_{22} \gets L_{21} \cdot A_{12} + A_{22}$ in line \ref{alg:ple_iterative:nse}.
 \item[] $(A_{31}, A_{32} ) \gets (A_{31} \cdot (E_{11})^{-1}, A_{31} \cdot (E_{11})^{-1} \cdot A_{12} + A_{32} )$ in line~\ref{alg:ple_iterative:sswe}
\end{itemize}
using one vector addition per row. The update of $A_{22}$ is a multiplication by $A_{12}$ plus an addition and is hence accomplished by one lookup to $T$ and one vector addition per row. Similarly, the updates of $A_{31}$ and $A_{32}$ can be achieved using lookups in $T'$ and $T$ and one vector addition per row.

\begin{algorithm}
\KwIn{$A$ -- an $m \times n$ matrix}
\KwResult{Returns $r$, the rank of $A$, a permutation $P$ of size $m$ and an $r$-dimensional vector $Q$} 
\KwEnsures{$A \leftarrow [L \setminus E]$ where $L$ is unit lower triangular and
  $E$ is in echelon form such that $A=PLE$ and $Q_i$ is the column index of the
  $i$th pivot of $E$.}
\SetKw{KwAnd}{and}
\SetKw{KwBreak}{break}

\Begin{
  $r, c \leftarrow 0,0$\;
  $k \longleftarrow $ pick some $0 < k \leq n$; \tcp{$\ord{n^3/\log n} \rightarrow k \approx \log n$}
  $Q\longleftarrow [0,\dots,0]$\;
  \While{$r<m$ \KwAnd $c < n$} {
    \If{$k > n - c$}{$k \leftarrow n - c$\;}
    \tcc{$A=
      \begin{bmatrix}
            A_{00} & A_{01}& A_{02}\\ 
            A_{00} & A_{11}& A_{12}\\ 
      \end{bmatrix}$ where $A_{00}$ is $r\times c$ and $A_{11}$ is $(n-r)\times k$
    }
    $\rbar, \overline{P}, \overline{Q},s \longleftarrow \partialple(A_{11})$\nllabel{alg:ple_iterative:partialple}\;

    \For{$0 \leq i < \rbar$}{
      $Q_{r+i} \longleftarrow c+ \overline{Q}_i$\;
      $P_{r+i} \longleftarrow r+\overline{P}_i$\;
    }

    \tcc{$A=
      \begin{bmatrix}
        A_{00} & A_{01}& A_{02}\\ 
        A_{10} & L_{11} \setminus E_{11}& A_{12}\\ 
        A_{20} & \begin{array}{cc} L_{21}\ \mid &0 \end{array} & A_{22}\\ 
        A_{30} & A_{31} & A_{32}\\ 
      \end{bmatrix}
      $ where $A_{00}$ is $r\times c$ and $L_{11} $ is $\rbar\times \rbar$, $E_{11}$
      is $\rbar \times k$}
    
    $
    \begin{bmatrix}
      A_{10}\\A_{20}
    \end{bmatrix} \longleftarrow 
    P^T    \begin{bmatrix}
      A_{10}\\A_{20}
    \end{bmatrix}$,
    $
    \begin{bmatrix}
      A_{12}\\A_{22}
    \end{bmatrix} \longleftarrow 
    P^T    \begin{bmatrix}
      A_{12}\\A_{22}
    \end{bmatrix}$\;
     $A_{12} \longleftarrow (L_{11})^{-1} \times A_{12}$\nllabel{alg:ple_iterative:ANE}\;
    $A_{22} \gets L_{21} \cdot A_{12} + A_{22}$ \nllabel{alg:ple_iterative:nse} \;
    $(A_{31}, A_{32} ) \gets (A_{31} \cdot E_{11}^{-1}, A_{31} \cdot E_{11}^{-1} \cdot
    A_{12} + A_{32} )$ \nllabel{alg:ple_iterative:sswe}\;
  
    \tcp{Now compress L}
    \For{$0 \leq i < \rbar$}{
      swap the columns $r+i$ and $c+i$ starting at row $r+i$\;
    }
	$r,c \longleftarrow r+\rbar,c+k$\;
  }
  \Return{$P,Q,r$};
}
\caption{Block-iterative \ple Decomposition}
\label{alg:ple_iterative}
\end{algorithm}

Overall we have:
\begin{lemma}
Let $k = \log n$, then \PLEI computes the PLE decomposition of some $n \times n$ matrix $A$ in $\ord{n^3/\log n}$ field operations if the M4RM algorithm is used in line~\ref{alg:ple_iterative:nse} and \ref{alg:ple_iterative:sswe}.
\end{lemma}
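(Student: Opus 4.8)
The plan is to argue that, once $k=\log n$ is fixed, each pass of the main \texttt{while} loop of \PLEI costs $\ord{n^2}$ field operations and that there are $\ord{n/\log n}$ such passes, giving the claimed $\ord{n^3/\log n}$ total. First I would bound the number of iterations: the counter $c$ is incremented by $k$ in every iteration and the loop terminates once $c\geq n$, so there are at most $\lceil n/k\rceil$ iterations; moreover $k$ never increases (the assignment $k\gets n-c$ can only lower it), so $2^k\leq n$ holds throughout, which is what will keep the table constructions affordable.

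Next I would account for the cost of every line within one iteration. The call to \partialple in line~\ref{alg:ple_iterative:partialple} runs on a submatrix with at most $n$ rows, exactly $k$ columns, and rank $\rbar\leq k$; by the $\ord{mnr}$ bound already established for Algorithm~\ref{alg:ple_gauss} this costs $\ord{n k^2}$. The two row permutations and the triangular solve $A_{12}\gets L_{11}^{-1}A_{12}$ in line~\ref{alg:ple_iterative:ANE} act on matrices of total width at most $n$, so they cost $\ord{n^2}$ and $\ord{k^2 n}$ respectively, while the column-compression loop performs at most $k$ column swaps of length at most $n$, i.e.\ $\ord{kn}$. Building the Gray-code tables $T$ (width $\leq n$) and $T'$ (width $\leq k$), together with the postprocessing of $T$ that folds in the update of $L$, costs $\ord{2^k n}$ and $\ord{2^k k}$ vector operations. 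Finally, invoking the argument given just before the lemma, once $T$ and $T'$ are built the update of $A_{22}$ in line~\ref{alg:ple_iterative:nse} and the simultaneous update of $A_{31}$ and $A_{32}$ in line~\ref{alg:ple_iterative:sswe} cost $\ord{1}$ table lookups plus one vector addition (of length $\leq n$) per row; the number of rows touched by these two steps is at most $n$, so together they cost $\ord{n^2}$.

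Summing these per-iteration contributions and multiplying by $\ord{n/k}$ iterations yields $\ord{n^2 k}$ for the \partialple calls and the triangular solves, $\ord{n^3/k}$ for the permutations, the table constructions, and the M4RM-based updates, and $\ord{n^2}$ for the column compressions; substituting $k=\log n$, hence $2^k=n$, every term becomes $\ord{n^3/\log n}$. That the output is genuinely a PLE decomposition with $Q$ recording the column rank profile I would obtain from the correctness of \partialple together with the block identities displayed just before the lemma, which show that lines~\ref{alg:ple_iterative:nse} and~\ref{alg:ple_iterative:sswe} realise exactly the block-PLE updates $A_{22}\gets L_{21}A_{12}+A_{22}$ and $(A_{31},A_{32})\gets(A_{31}E_{11}^{-1},A_{31}E_{11}^{-1}A_{12}+A_{32})$.

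The main obstacle I expect is the bookkeeping inside the M4RM step: one has to verify carefully that a single pair of tables --- $T$ giving $v\mapsto v\cdot A_{12}$ and $T'$ giving $v\mapsto v\cdot E_{11}^{-1}$ --- really suffices to perform, in $\ord{1}$ lookups and one row addition per row, all three of the update of $A_{22}$, the in-place rewriting of $A_{31}$ by its $L$-block, and the update of $A_{32}$, and in particular that the postprocessing which folds the $L$-update into $T$ adds only $\ord{2^k k}$ operations rather than $\ord{2^k n}$. A secondary point requiring care is confirming that $2^k$ stays bounded by $n$, so that the tables never dominate; this is immediate from $k=\log n$ and the fact that the rescaling $k\gets n-c$ can only decrease $k$.
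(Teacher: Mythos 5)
Your proposal is correct and follows essentially the same route as the paper's proof: itemize the field-operation cost of the \partialple call, the triangular solve, the Gray-code table construction, and the tabulated updates, sum over the $\ord{n/k}$ iterations, and substitute $k=\log n$. The only difference is cosmetic --- the paper keeps slightly tighter per-iteration bounds in terms of the partial ranks $r_i$ and the shrinking trailing width $n-(i+1)k$, whereas you use the uniform bounds $\rbar\leq k$ and width $\leq n$, which still yield $\ord{n^3/\log n}$.
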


\begin{proof}
Let $r_i$ be the rank computed by each call to Algorithm~\ref{alg:ple_gauss}.  Note that the sum of the $r_i$'s is the rank of the matrix, $r$.

The overall operation count is obtained the sum of the following contributions:
\begin{itemize}
\item calls to algorithm~\ref{alg:ple_gauss} (line~\ref{alg:ple_iterative:partialple}):
$$
\sum_{i=0}^{n/k-1}(n-\sum_{j=0}^{i-2}r_j)kr_i \leq nkr,
$$
\item solving with the $r_i\times r_i$ lower triangular matrix $L_{11}$ on
  $A_{12}$ (line~\ref{alg:ple_iterative:ANE}) can be done with the classic  cubic time back substitution algorithm:
$$
\sum_{i=0}^{n/k-1}r_i^2(n-(i+1)k) \leq \sum_{i=0}^{n/k-1}k^2n\leq n^2k,
$$
\item updates based on Gray code tabulated matrix multiplication (line~\ref{alg:ple_iterative:nse}):
$$
\sum_{i=0}^{n/k-1}(n-(i+1)k)(2^k+n-\sum_{j=0}^{i}r_j) \leq  (2^k+n)\sum_{i=1}^{n/k} ik \leq \frac{(2^k+n)n^2}{k},
$$
\item updates in line~\ref{alg:ple_iterative:sswe} share the same table as the
  previous one, line~\ref{alg:ple_iterative:nse}  (all linear combinations of
  rows of $A_{12}$). Hence they only take: 
$$\sum_{i=0}^{n/k-1}(n-(i+1)k)r_i \leq \frac{rn^2}{k}.$$ 
\end{itemize}
For $k=\log n$, the overall complexity is $\ord{n^3/\log n}$.
\end{proof}

\subsection{Block Recursive}
\label{sec:algorithms-recursive}

In \cite{jeannerod-pernet-storjohann:ple2010} an algorithm is given that computes a PLE decomposition in-place and in time complexity $\ord{n^\omega}$ by reduction to matrix-matrix multiplication. This algorithm can be recovered by setting $k=n/2$ in \PLEI and by replacing \partialple with \ple, We give it explicity in Algorithm~\ref{alg:ple_recursive} for clarity of presentation. The asymptotic complexity of the algorithm depends on matrix multiplication algorithm used. Our implementation uses Strassen-Winograd and hence achieves the $\ord{n^\omega}$ running time; this is the subject of a previous paper \cite{matmulgf2}.  We note that $A_{01} \longleftarrow (L_{00})^{-1} \times A_{01}$ is the well-known triangular solving with matrices (TRSM) operation with a lower triangular matrix on the left which can be reduced to matrix-matrix multiplication as well \cite{jeannerod-pernet-storjohann:ple2010}.

\begin{algorithm}[htbp]
\KwIn{$A$ -- an $m \times n$ matrix}
\KwResult{Returns $r$, the rank of $A$, a permutation $P$ of size $m$ and an $r$-dimensional vector $Q$} 
\KwEnsures{$A \leftarrow [L \setminus E]$ where $L$ is unit lower triangular and
  $E$ is in echelon form such that $A=PLE$ and $Q_i$ is the column index of the
  $i$th pivot of $E$.}
\SetKw{KwAnd}{and}
\Begin{
  $n' \longleftarrow$ pick some integer $0 \leq n' < n$; \tcp{$n' \approx n/2$}
  \tcc{$A=
      \begin{bmatrix}
            A_{0} & A_{1}\\ 
      \end{bmatrix}$ where $A_{0}$ is $m\times n'$.
  }

  $r',P',Q' \longleftarrow \ple(A_{0})$; \tcp{first recursive call}

  \For{$0 \leq i \leq r'$}{$Q_i \longleftarrow Q'_i$\;}
  \For{$0 \leq i \leq r'$}{$P_i \longleftarrow P'_i$\;}

  \tcc{$A=
      \begin{bmatrix}
            L_{00} \setminus E_{00} & A_{01}\\ 
            A_{10} & A_{11}\\ 
      \end{bmatrix}$ where $L_{00}$ is $r' \times r'$ and $E_{00}$ is $r' \times n'$.
  }

  \If{$r'$}{
      $A_1 \longleftarrow P \times A_1$\;
      $A_{01} \longleftarrow (L_{00})^{-1} \times A_{01}$\;
      $A_{11} \longleftarrow A_{11} + A_{10} \times A_{01}$\;
  }

   $r'',P'',Q'' \longleftarrow$ \ple($A_{11}$); \tcp{second recursive call}
   
   $A_{10} \longleftarrow P'' \times A_{10}$\;
 
   \tcc{Update P \& Q}
   \For{$0 \leq i < m - r'$}{$P_{r' + i} \longleftarrow P''_i + r'$\;}
   \For{$0 \leq i < n - n'$}{$Q_{n' + i} \longleftarrow Q''_i + n'$\;}

   $j \leftarrow r'$\;
   \For{$n' \leq i < n' + r''$}{$Q_j \longleftarrow Q_i$; $j \leftarrow j + 1$\;}

  \tcc{Now compress L}
  $j \leftarrow n'$\;
  \For{$r' \leq i < r'+r''$}{
   swap the columns $i$ and $j$  starting at row $i$\;
  }
  \Return{$P,Q,r' + r''$}\;
}
\caption{Block-recursive PLE Decomposition}
\label{alg:ple_recursive} 
\end{algorithm}

\section{Implementation details}
\label{sec:implementation}

In our implementation we call Algorithm~\ref{alg:ple_recursive} which recursively calls itself until the matrix fits into L2 cache when it calls \PLEI. Our implementation of \PLEI uses four Gray code tables. (See also \cite{matmulgf2}.)

One of the major bottlenecks are column swaps. Note that this operation is usually not considered when estimating the complexity of algorithms in linear algebra, since it is not a field operation. In Algorithm~\ref{alg:swap} a simple algorithm for swapping two columns $a$ and $b$ is given with bit-level detail. In Algorithm~\ref{alg:swap} we assume that the bit position of $a$ is greater than the bit position of $b$ for clarity of presentation. The advantage of the strategy in Algorithm~\ref{alg:swap} is that it uses no conditional jumps in the inner loop, However, it still requires 9 instructions per row. On the other hand, we can add two rows with $9 \cdot 128 = 1152$ entries in 9 instructions if the ``SSE2 instruction set''\label{say_sse2} is available. Thus, for matrices of size $1152 \times 1152$ it takes roughly the same number of instructions to add two matrices as it does to swap two columns. If we were to swap every column with one other column once during some algorithm it thus would be as expensive as a matrix multiplication (for matrices of these dimensions).

\begin{algorithm}[htbp]
\KwIn{$A$ -- an $m \times n$ matrix}
\KwIn{$a$ -- an integer $0 \leq a < b < n$}
\KwIn{$b$ -- an integer $0 \leq a < b < n$}
\KwResult{Swaps the columns $a$ and $b$ in $A$}
\SetKw{KwAnd}{and}
\Begin{
$M \longleftarrow$  the memory where $A$ is stored\;
$a_w, b_w \longleftarrow$ the word index of $a$ and $b$ in $M$\;
$a_b, b_b \longleftarrow$ the bit index of $a$ and $b$ in $a_w$ and $b_w$\;
$\Delta \longleftarrow$ $a_b - b_b$\;
$a_{m} \longleftarrow$ the bit-mask where only the $a_b$-th bit is set to 1\;
$b_{m} \longleftarrow$ the bit-mask where only the $b_b$-th bit is set to 1\;
\For{$0 \leq i < m$}{
  $R \longleftarrow$ the memory where the row $i$ is stored\;
  $R[a_w] \longleftarrow R[a_w] \oplus ((R[b_w] \odot b_{m}) \gg \Delta)$\;
  $R[b_w] \longleftarrow R[b_w] \oplus ((R[a_w] \odot a_{m}) \ll \Delta)$\;
  $R[a_w] \longleftarrow R[a_w] \oplus ((R[b_w] \odot b_{m}) \gg \Delta)$\;
}
}
\caption{Column Swap}
\label{alg:swap} 
\end{algorithm}

Another bottleneck for relatively sparse matrices in dense ``row-major representation'' is the search for pivots. Searching for a non-zero element in a row can be relatively expensive due to the need to identify the bit position; hence, we cannot work with machine words in bulk but have to read individual bits and see which column they are in. However, the main performance penalty is due to the fact that searching for a non-zero entry in one column in a row-major representation is very cache unfriendly.

Indeed, both our implementation and the implementation available in \Magma suffer from performance regression on relatively sparse matrices as shown in Figure~\ref{fig:sparse-m4ri}. We stress that this is despite the fact that the theoretical complexity of matrix decomposition is rank sensitive. More precisely, strictly fewer field operations must be performed for low-rank matrices. This highlights the distinction between minimizing the running time and minimizing the number of field operations---these are not identical objectives. While the penalty for relatively sparse matrices is much smaller for our implementation than for \Magma, it clearly does not achieve the theoretically optimum performance.

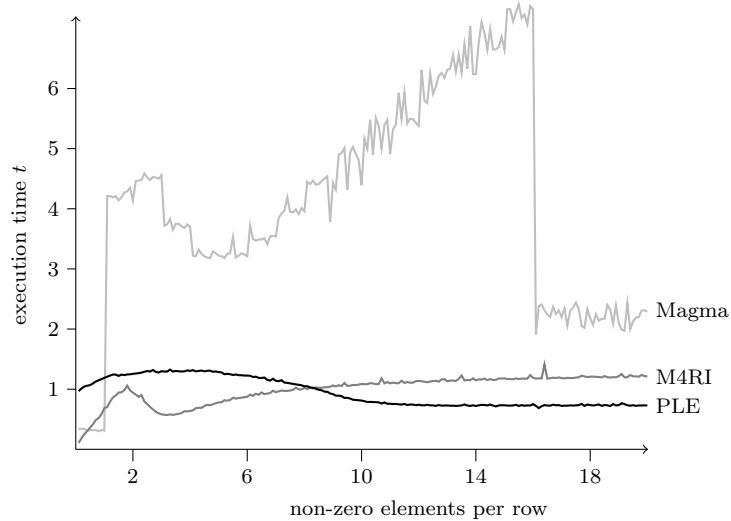
\begin{figure}[htbp]
\begin{center}
\begin{tikzpicture}[xscale=0.038,yscale=0.8]
  \draw[->] (-0.0,-0.0) -- (200.2,-0.0);
  \draw[->] (0,-0.0) -- (0,7.2);
  \draw (-20,3.4) node[rotate=90] {execution time $t$};

  \foreach \y in {1,...,6}
  \draw (0.1,\y) -- (-3,\y) node[anchor=east] {\y};

  %ticks
  \draw (20,-0.0) -- (20,-0.2) node[anchor=north] {2};
  \draw (60,-0.0) -- (60,-0.2) node[anchor=north] {6};
  \draw (100,-0.0) -- (100,-0.2) node[anchor=north] {10};
  \draw (140,-0.0) -- (140,-0.2) node[anchor=north] {14};
  \draw (180,-0.0) -- (180,-0.2) node[anchor=north] {18};

  \draw (120,-1.0) node {non-zero elements per row};

  \draw[thick,color=lightgray] plot[id="magma"] coordinates {(1, 0.34000) (2, 0.34000) (3, 0.34000) (4, 0.31000) (5, 0.32000) (6, 0.32000) (7, 0.32000) (8, 0.30000) (9, 0.32000) (10, 0.31000) (11, 4.21000) (12, 4.21000) (13, 4.19000) (14, 4.21000) (15, 4.14000) (16, 4.19000) (17, 4.27000) (18, 4.29000) (19, 4.35000) (20, 4.14000) (21, 4.46000) (22, 4.47000) (23, 4.48000) (24, 4.59000) (25, 4.52000) (26, 4.48000) (27, 4.53000) (28, 4.52000) (29, 4.50000) (30, 4.56000) (31, 3.72000) (32, 3.74000) (33, 3.83000) (34, 3.65000) (35, 3.75000) (36, 3.75000) (37, 3.71000) (38, 3.68000) (39, 3.74000) (40, 3.70000) (41, 3.22000) (42, 3.21000) (43, 3.32000) (44, 3.25000) (45, 3.21000) (46, 3.19000) (47, 3.18000) (48, 3.29000) (49, 3.26000) (50, 3.22000) (51, 3.21000) (52, 3.18000) (53, 3.25000) (54, 3.26000) (55, 3.53000) (56, 3.19000) (57, 3.22000) (58, 3.26000) (59, 3.24000) (60, 3.21000) (61, 3.72000) (62, 3.50000) (63, 3.47000) (64, 3.49000) (65, 3.49000) (66, 3.51000) (67, 3.41000) (68, 3.54000) (69, 3.55000) (70, 3.54000) (71, 3.85000) (72, 3.92000) (73, 3.97000) (74, 4.19000) (75, 3.95000) (76, 3.94000) (77, 3.99000) (78, 3.91000) (79, 4.01000) (80, 3.96000) (81, 4.45000) (82, 4.41000) (83, 4.47000) (84, 4.40000) (85, 4.41000) (86, 4.46000) (87, 4.50000) (88, 4.54000) (89, 3.78000) (90, 4.43000) (91, 4.32000) (92, 4.90000) (93, 4.93000) (94, 5.01000) (95, 4.32000) (96, 4.94000) (97, 5.03000) (98, 4.91000) (99, 4.81000) (100, 4.39000) (101, 5.16000) (102, 5.00000) (103, 5.48000) (104, 4.90000) (105, 5.51000) (106, 5.38000) (107, 5.01000) (108, 5.39000) (109, 5.48000) (110, 4.91000) (111, 5.32000) (112, 5.40000) (113, 5.93000) (114, 5.50000) (115, 5.93000) (116, 5.42000) (117, 5.50000) (118, 5.49000) (119, 5.43000) (120, 5.38000) (121, 6.31000) (122, 5.80000) (123, 5.76000) (124, 6.25000) (125, 5.91000) (126, 6.03000) (127, 6.21000) (128, 6.26000) (129, 6.18000) (130, 6.31000) (131, 6.33000) (132, 6.56000) (133, 6.28000) (134, 6.56000) (135, 6.27000) (136, 6.73000) (137, 6.33000) (138, 7.04000) (139, 6.24000) (140, 6.24000) (141, 6.77000) (142, 7.10000) (143, 6.92000) (144, 6.65000) (145, 6.99000) (146, 6.65000) (147, 6.65000) (148, 6.69000) (149, 6.86000) (150, 6.65000) (151, 7.32000) (152, 7.32000) (153, 7.12000) (154, 7.26000) (155, 7.40000) (156, 7.17000) (157, 7.23000) (158, 7.07000) (159, 7.37000) (160, 7.32000) (161, 1.91000) (162, 2.38000) (163, 2.41000) (164, 2.31000) (165, 2.24000) (166, 2.20000) (167, 2.37000) (168, 2.25000) (169, 2.31000) (170, 2.20000) (171, 2.34000) (172, 2.06000) (173, 2.30000) (174, 2.36000) (175, 2.44000) (176, 2.33000) (177, 2.05000) (178, 2.02000) (179, 2.33000) (180, 2.18000) (181, 2.07000) (182, 2.25000) (183, 2.14000) (184, 2.17000) (185, 2.17000) (186, 2.40000) (187, 2.32000) (188, 2.09000) (189, 2.39000) (190, 2.10000) (191, 1.99000) (192, 1.97000) (193, 2.43000) (194, 2.00000) (195, 2.13000) (196, 2.19000) (197, 2.20000) (198, 2.31000) (199, 2.31000) (200, 2.29000)}  node[right,color=black] {Magma};

\draw[color=gray,thick] plot[id="m4ri"] coordinates {(1, 0.106155) (2, 0.184164) (3, 0.251415) (4, 0.303108) (5, 0.360643) (6, 0.414627) (7, 0.486798) (8, 0.519742) (9, 0.583870) (10, 0.685364) (11, 0.701443) (12, 0.791403) (13, 0.862657) (14, 0.894982) (15, 0.943845) (16, 0.948925) (17, 0.983653) (18, 1.058876) (19, 0.978244) (20, 0.942851) (21, 0.910335) (22, 0.870607) (23, 0.903626) (24, 0.794898) (25, 0.751917) (26, 0.703680) (27, 0.664177) (28, 0.622128) (29, 0.598736) (30, 0.581246) (31, 0.575591) (32, 0.569314) (33, 0.581121) (34, 0.571415) (35, 0.581529) (36, 0.590549) (37, 0.602096) (38, 0.633154) (39, 0.631056) (40, 0.637458) (41, 0.647052) (42, 0.683402) (43, 0.690109) (44, 0.688369) (45, 0.721489) (46, 0.740237) (47, 0.733376) (48, 0.749635) (49, 0.766576) (50, 0.780792) (51, 0.782709) (52, 0.815303) (53, 0.808925) (54, 0.819675) (55, 0.856786) (56, 0.842952) (57, 0.852394) (58, 0.859198) (59, 0.866153) (60, 0.873672) (61, 0.908846) (62, 0.888298) (63, 0.924500) (64, 0.906349) (65, 0.944918) (66, 0.917566) (67, 0.926394) (68, 0.935242) (69, 0.976125) (70, 0.949351) (71, 0.964972) (72, 0.976726) (73, 0.971648) (74, 0.971378) (75, 0.976877) (76, 0.989472) (77, 1.023465) (78, 0.996700) (79, 0.996345) (80, 1.021938) (81, 1.002980) (82, 1.025406) (83, 1.009642) (84, 1.028584) (85, 1.022169) (86, 1.023683) (87, 1.031212) (88, 1.032265) (89, 1.041096) (90, 1.061411) (91, 1.046542) (92, 1.065558) (93, 1.052070) (94, 1.103617) (95, 1.054645) (96, 1.057528) (97, 1.070088) (98, 1.068171) (99, 1.075558) (100, 1.086820) (101, 1.081535) (102, 1.083108) (103, 1.100390) (104, 1.094615) (105, 1.083225) (106, 1.085532) (107, 1.179791) (108, 1.096924) (109, 1.118066) (110, 1.096977) (111, 1.108275) (112, 1.099314) (113, 1.098875) (114, 1.140278) (115, 1.104692) (116, 1.117734) (117, 1.107452) (118, 1.108677) (119, 1.133074) (120, 1.138049) (121, 1.131893) (122, 1.123428) (123, 1.115298) (124, 1.125756) (125, 1.136285) (126, 1.137816) (127, 1.125501) (128, 1.161839) (129, 1.166295) (130, 1.131556) (131, 1.143585) (132, 1.142175) (133, 1.137177) (134, 1.170231) (135, 1.226686) (136, 1.146127) (137, 1.152069) (138, 1.148524) (139, 1.179800) (140, 1.148448) (141, 1.162584) (142, 1.160667) (143, 1.176126) (144, 1.150629) (145, 1.167743) (146, 1.153145) (147, 1.155713) (148, 1.178170) (149, 1.176623) (150, 1.179899) (151, 1.168482) (152, 1.168863) (153, 1.165564) (154, 1.180034) (155, 1.167097) (156, 1.176009) (157, 1.174169) (158, 1.175008) (159, 1.235052) (160, 1.188697) (161, 1.176353) (162, 1.179927) (163, 1.178066) (164, 1.410760) (165, 1.175002) (166, 1.181250) (167, 1.190936) (168, 1.185536) (169, 1.195529) (170, 1.183506) (171, 1.179883) (172, 1.187214) (173, 1.184755) (174, 1.184619) (175, 1.193059) (176, 1.197349) (177, 1.225413) (178, 1.212423) (179, 1.201041) (180, 1.207087) (181, 1.208996) (182, 1.205363) (183, 1.201408) (184, 1.195647) (185, 1.217675) (186, 1.198349) (187, 1.193616) (188, 1.199563) (189, 1.197402) (190, 1.215841) (191, 1.252375) (192, 1.196108) (193, 1.199229) (194, 1.236302) (195, 1.212404) (196, 1.212873) (197, 1.207745) (198, 1.236736) (199, 1.218035) (200, 1.210933)} node[right,color=black] {M4RI};

\draw[color=black,thick] plot[id="ple"] coordinates {(1, 0.967001) (2, 1.011240) (3, 1.043532) (4, 1.057283) (5, 1.067011) (6, 1.094712) (7, 1.126642) (8, 1.147539) (9, 1.166968) (10, 1.188493) (11, 1.207241) (12, 1.226128) (13, 1.243596) (14, 1.242186) (15, 1.220430) (16, 1.242211) (17, 1.234236) (18, 1.244736) (19, 1.247406) (20, 1.254644) (21, 1.260065) (22, 1.265938) (23, 1.274927) (24, 1.289218) (25, 1.302123) (26, 1.292878) (27, 1.319090) (28, 1.286986) (29, 1.277622) (30, 1.298599) (31, 1.293035) (32, 1.293842) (33, 1.324919) (34, 1.292739) (35, 1.302995) (36, 1.297650) (37, 1.294314) (38, 1.295756) (39, 1.312144) (40, 1.304358) (41, 1.316395) (42, 1.296363) (43, 1.302899) (44, 1.310021) (45, 1.296868) (46, 1.305285) (47, 1.307949) (48, 1.288711) (49, 1.294765) (50, 1.283429) (51, 1.280192) (52, 1.248310) (53, 1.255100) (54, 1.248704) (55, 1.236663) (56, 1.243733) (57, 1.232309) (58, 1.231709) (59, 1.226472) (60, 1.226308) (61, 1.222957) (62, 1.201951) (63, 1.207814) (64, 1.187733) (65, 1.185693) (66, 1.192795) (67, 1.155630) (68, 1.142165) (69, 1.172876) (70, 1.138912) (71, 1.132558) (72, 1.149631) (73, 1.110075) (74, 1.104494) (75, 1.096257) (76, 1.088169) (77, 1.079330) (78, 1.073613) (79, 1.064443) (80, 1.047714) (81, 1.049457) (82, 1.027283) (83, 1.023933) (84, 1.008276) (85, 0.988839) (86, 0.983257) (87, 0.959266) (88, 0.956504) (89, 0.928407) (90, 0.917338) (91, 0.892135) (92, 0.892280) (93, 0.884687) (94, 0.854258) (95, 0.856597) (96, 0.844121) (97, 0.824273) (98, 0.820348) (99, 0.815931) (100, 0.810675) (101, 0.800143) (102, 0.785634) (103, 0.788427) (104, 0.785680) (105, 0.784312) (106, 0.780050) (107, 0.760630) (108, 0.779530) (109, 0.757411) (110, 0.757455) (111, 0.752751) (112, 0.755011) (113, 0.748248) (114, 0.744579) (115, 0.754171) (116, 0.736143) (117, 0.744608) (118, 0.743812) (119, 0.733087) (120, 0.741566) (121, 0.732432) (122, 0.730616) (123, 0.741013) (124, 0.730900) (125, 0.730208) (126, 0.722856) (127, 0.735323) (128, 0.726059) (129, 0.728209) (130, 0.728833) (131, 0.725145) (132, 0.725483) (133, 0.721675) (134, 0.724811) (135, 0.729282) (136, 0.746324) (137, 0.728213) (138, 0.720790) (139, 0.731694) (140, 0.720459) (141, 0.731025) (142, 0.738490) (143, 0.737890) (144, 0.729711) (145, 0.742782) (146, 0.727809) (147, 0.713109) (148, 0.729410) (149, 0.733000) (150, 0.731499) (151, 0.727514) (152, 0.726306) (153, 0.729044) (154, 0.740103) (155, 0.720986) (156, 0.734946) (157, 0.724087) (158, 0.730419) (159, 0.727758) (160, 0.750673) (161, 0.721864) (162, 0.685695) (163, 0.715466) (164, 0.735101) (165, 0.727639) (166, 0.726179) (167, 0.741733) (168, 0.732329) (169, 0.735057) (170, 0.721917) (171, 0.725240) (172, 0.725486) (173, 0.730473) (174, 0.727208) (175, 0.730838) (176, 0.733854) (177, 0.755746) (178, 0.720966) (179, 0.744222) (180, 0.736405) (181, 0.728481) (182, 0.735357) (183, 0.745696) (184, 0.722822) (185, 0.735396) (186, 0.731748) (187, 0.719136) (188, 0.752489) (189, 0.730183) (190, 0.734502) (191, 0.767001) (192, 0.751637) (193, 0.732539) (194, 0.725180) (195, 0.733402) (196, 0.722572) (197, 0.729101) (198, 0.728895) (199, 0.730496) (200, 0.729762)} node[right,color=black] {PLE};

\end{tikzpicture}
\caption{Gaussian elimination of $10,000 \times 10,000$ matrices on Intel 2.33GHz Xeon E5345 comparing Magma 2.17-12 and M4RI 20111004.}
\label{fig:sparse-m4ri}
\end{center}
\end{figure}

\section{Performance}
\label{sec:results}

In Table~\ref{tab:ple-iterative-vs-gauss-c2d}, we consider reduced row echelon forms, i.e., we first compute the PLE decomposition which we then use to recover the reduced row echelon form, of dense random matrices over \GFZ. We compare the running times for Algorithm~\ref{alg:ple_recursive} (with either cubic PLE decomposition or \PLEI as base case).

\begin{table}[htbp]
\begin{footnotesize}
\begin{center}
\begin{tabular}{|c|r|r|}
\hline
Matrix Dimension & {base case: cubic PLE}& {base case: \PLEI}\\
\hline
$10,000 \times 10,000$ & 1.478s &   0.880s \\
$16,384 \times 16,384$ & 5.756s &   3.570s \\
$20,000 \times 20,000$ & 8.712s &   5.678s \\
$32,000 \times 32,000$ &29.705s &  24.240s \\
\hline
\end{tabular}
\caption{Comparing base cases on 64-bit Debian/GNU Linux, 2.33Ghz \CTD.}
\label{tab:ple-iterative-vs-gauss-c2d}
\end{center}
\end{footnotesize}
\end{table}

In Table~\ref{tab:pluq-random}, we give the average running time over ten trials for computing reduced row echelon forms of dense random $n \times n$ matrices over \GFZ. We compare the asymptotically fast implementation due to Allan Steel in \Magma, the cubic Gaussian elimination implemented by Victor Shoup in NTL, and both our implementations. Both the implementation in \Magma and our PLE decomposition reduce matrix decomposition to matrix multiplication. A discussion and comparison of matrix multiplication in the M4RI library and in \Magma can be found in \cite{matmulgf2}. In Table~\ref{tab:pluq-random}, the column `PLE' denotes the complete running time for first computing the PLE decomposition and the computation of the reduced row echelon form from the PLE form. 

\begin{table}[htbp]
\begin{footnotesize}
\begin{center}
\begin{tabular}{|c|r|r|r|r|}
\hline
 & \multicolumn{4}{|c|}{64-bit Linux, 2.6Ghz \Opteron}\\
\hline
$n$             &  {\Magma} &   {NTL} &   {M4RI} &    {PLE}\\
                & {2.15-10} &   5.4.2 & 20090105 & 20100324\\
\hline
$10,000$ &   3.351s &   18.45s &   2.430s &   1.452s\\
$16,384$ &  11.289s &   72.89s &  10.822s &   6.920s\\
$20,000$ &  16.734s &  130.46s &  19.978s &  10.809s\\
$32,000$ &  57.567s &  479.07s &  83.575s &  49.487s\\
$64,000$ & 373.906s & 2747.41s & 537.900s & 273.120s\\
\hline
 & \multicolumn{4}{|c|}{64-bit Linux, 2.33Ghz \Xeon (E5345)}\\
\hline
$n$ & {\Magma} &   {NTL} &   {M4RI} &    {PLE}\\
& {2.16-7} &   5.4.2 & 20100324 & 20100324\\
\hline
$10,000$&   2.660s &  12.05s &   1.360s &   0.864s\\
$16,384$&   8.617s &  54.79s &   5.734s &   3.388s\\
$20,000$&  12.527s & 100.01s &  10.610s &   5.661s\\
$32,000$&  41.770s & 382.52s &  43.042s &  20.967s\\
$64,000$& 250.193s &      -- & 382.263s & 151.314s\\

\hline
\end{tabular}
\caption{Gaussian elimination for random matrices $n \times n$ matrices}
\label{tab:pluq-random}
\end{center}
\end{footnotesize}
\end{table}

\clearpage
\bibliographystyle{acmtrans}
\bibliography{pluqm4ri_biblio}

\appendix
\section{The M4RI Algorithm} \label{app:m4ri}
Consider the matrix $A$ of dimension $m \times n$ in Figure~\ref{fig:m4ri-visualisation}. Because $k=3$ here, the $3 \times n$ submatrix on the top has full rank and we performed Gaussian elimination on it. Now, we need to clear the first $k$ columns of $A$ in the rows below $k$ (and above the submatrix in general if we want the reduced row echelon form). There are $2^k$ possible linear combinations of the first $k$ rows, which we store in a table $T$. We index $T$ by the first $k$ bits (e.g. $0 1 1 \rightarrow 3$). Now to clear $k$ columns of row $i$ we use the first $k$ bits of that row as an index in $T$ and add the matching row of $T$ to row $i$, causing a cancellation of $k$ entries. Instead of up to $k$ additions, this only costs one addition, due to the pre-computation. Using Gray codes (see also \cite{matmulgf2}) or similar techniques this pre-computation can be performed in $2^k$ vector additions and the overall cost is $2^k + m - k + k^2$ vector additions in the worst case (where $k^2$ accounts for the Gauss elimination of the $k \times n$ submatrix). The naive approach would cost $k \cdot m$ row additions in the worst case to clear $k$ columns. If we set $k = \log m$ then the complexity of clearing $k$ columns is $\ord{m + \log^2 m}$ vector additions in contrast to $\ord{m \cdot \log m}$ vector additions using the naive approach.

\begin{figure}[htbp]
\begin{align*}
A = \left(\begin{array}{rrr|rrrrrr}
1 & 0 & 0 & 1 & 0 & 1 & 1 & 1 & \dots \\
0 & 1 & 0 & 1 & 1 & 1 & 1 & 0 & \dots \\
0 & 0 & 1 & 0 & 0 & 1 & 1 & 1 & \dots \\
\dots \\
0 & 0 & 0 & 1 & 1 & 0 & 1 & 0 & \dots \\
\bf{1} & \bf{1} & \bf{0} & 0 & 1 & 0 & 1 & 1 & \dots \\
0 & 1 & 0 & 0 & 1 & 0 & 0 & 1 & \dots \\
\dots \\
\bf{1} & \bf{1} & \bf{0} & 1 & 1 & 1 & 0 & 1 & \dots 
\end{array}\right)
T = \left[\begin{array}{rrr|rrrrrr}
0 & 0 & 0 & 0 & 0 & 0 & 0 & 0 & \dots\\
0 & 0 & 1 & 0 & 0 & 1 & 1 & 1 & \dots \\
0 & 1 & 0 & 1 & 1 & 1 & 1 & 0 & \dots \\
0 & 1 & 1 & 1 & 1 & 0 & 0 & 1 & \dots \\
1 & 0 & 0 & 1 & 0 & 1 & 1 & 1 & \dots \\
1 & 0 & 1 & 1 & 0 & 0 & 0 & 0 & \dots \\
\bf{1} & \bf{1} & \bf{0} & 0 & 1 & 0 & 0 & 1 & \dots \\
1 & 1 & 1 & 0 & 1 & 1 & 1 & 0 & \dots \\
\end{array}\right]
\end{align*}
\caption{M4RI Idea}
\label{fig:m4ri-visualisation}
\end{figure}

This idea leads to Algorithm~\ref{alg:m4ri}. In this algorithm the subroutine \gausssubmatrix\ (see Algorithm~\ref{alg:gaussubmatrix}) performs Gauss elimination on a $k \times n$ submatrix of $A$ starting at position $(r,c)$ and searches for pivot rows up to $m$. If it cannot find a submatrix of rank $k$ it will terminate and return the rank $\kbar$ found so far.

The subroutine \maketable\ (see Algorithm~\ref{alg:maketable}) constructs the table $T$ of all $2^k$ linear combinations of the $k$ rows starting at row $r$ and column $c$. More precisely, it enumerates all elements of the vector space $\mathsf{span}(r,...,r+\kbar-1)$ spanned by the rows $r,\dots, r+\kbar-1$. Finally, the subroutine \addrowsfromtable\ (see Algorithm~\ref{alg:addrows}) adds the appropriate row from $T$ --- indexed by $k$ bits starting at column $c$ --- to each row of $A$ with index $i \not\in \{r,\dots,r+\kbar-1\}$. That is, it adds the  appropriate linear combination of the rows $\{r,\dots,r+\kbar-1\}$  onto a row $i$ in order to clear $k$ columns.

\begin{algorithm}
\KwIn{$A$ -- an $m \times n$ matrix}
\KwIn{$r_{\textnormal{start}}$ -- an integer 
$0 \leq r_{\textnormal{start}} < m$}
\KwIn{$r_{\textnormal{end}}$ -- an integer 
$0 \leq r_{\textnormal{start}} \leq r_{\textnormal{end}} < m$}
\KwIn{$c_{\textnormal{start}}$ -- an integer
 $0 \leq c_{\textnormal{start}} < n$}
\KwIn{$k$ -- an integer $k > 0$}
\KwIn{$T$ -- a $2^k \times n$ matrix}
\KwIn{$L$ -- an integer array of length $2^k$}
\Begin{
\For{$r_{\textnormal{start}} \leq i < r_{\textnormal{end}}$}{
  $id = \sum_{j=0}^{k} A_{i,c_\textnormal{start} + j} \cdot 2^{k-j-1}$\;
  $j \longleftarrow L_{id}$\;
  add row $j$ from $T$ to the row $i$ of $A$ starting at
column $c_\textnormal{start}$\;
} 
}
\caption{\textsc{AddRowsFromTable}}
\label{alg:addrows}
\end{algorithm}

\begin{algorithm}
\KwIn{$A$ -- an $m \times n$ matrix}
\KwIn{$r_{\textnormal{start}}$ -- an integer 
$0 \leq r_{\textnormal{start}} < m$}
\KwIn{$c_{\textnormal{start}}$ -- an integer
 $0 \leq c_{\textnormal{start}} < n$}
\KwIn{$k$ -- an integer $k > 0$}
\KwResult{Retuns an $2^k \times n$ matrix $T$}
\Begin{
$T \longleftarrow$ the $2^k \times n$ zero matrix\;
\For{$1 \leq i < 2^k$}{
  $j \longleftarrow$ the row index of $A$ to add according to the
Gray code\;
 add row $j$ of $A$ to the row $i$ of $T$ starting at
column $c_\textnormal{start}$\;
}
 $L \longleftarrow$ integer array allowing to index $T$ by $k$ bits
starting at column $c_{\textnormal{start}}$\;
\Return{$T,L$}\;
}
\caption{\textsc{MakeTable}}
\label{alg:maketable}
\end{algorithm}

\begin{algorithm}
\KwIn{$A$ -- an $m \times n$ matrix}
\KwIn{$r$ -- an integer $0 \leq r < m$}
\KwIn{$c$ -- an integer $0 \leq c < n$}
\KwIn{$k$ -- an integer $k > 0$}
\KwIn{$r_\textnormal{end}$ -- an integer $0 \leq r \leq r_\textnormal{end} <
m$}
\KwResult{Returns the rank $\kbar \leq k$ and puts the $\kbar
\times (n-c)$ submatrix starting at $A[r,c]$ in reduced row echelon form.}
\Begin{
  $r_{start} \longleftarrow r$\;
  \For{$c \leq j < c + k$}{
    $found \longleftarrow \texttt{False}$\;
    \For{$r_{start} \leq i < r_\textnormal{end}$}{
       \For(\tcp*[h]{clear the first columns}){$0 \leq l < j-c$}{
          \lIf{$A[i,c+l] \neq 0$}{
            add row $r+l$ to row $i$ of $A$ starting at column $c+l$\;
          }
       }
       \If(\tcp*[h]{pivot?}){$A[i,j] \neq 0$}{
         Swap the rows $i$ and $r_{start}$ in $A$\;
         \For(\tcp*[h]{clear above}){$r \leq l < r_{start}$}{
           \lIf{$A[l,j]\neq 0$}{
             add row $r_{start}$ to row $l$ in $A$ starting at column $j$\;
           }
         }
         $r_{start} \longleftarrow r_{start} + 1$\;
         $found \longleftarrow True$\;
         break\;
       }
    }
    \If{$found = $\texttt{False}}{
      \Return{j - c}\;
    }
  }
  \Return{j - c}\;
}
\caption{\textsc{GaussSubmatrix}}
\label{alg:gaussubmatrix}
\end{algorithm}

\begin{algorithm}
\KwIn{$A$ -- an $m \times n$ matrix}
\KwIn{$k$ -- an integer $k > 0$}
\KwResult{$A$ is in reduced row echelon form.}
\Begin{
$r,c \longleftarrow 0,0$\;
\While{$c<n$}{
   \lIf{$c+k > n$}{
     $k \leftarrow n - c$\;
   }
   $\kbar \longleftarrow$ \textsc{GaussSubmatrix}($A, r, c, k, m$)\;
   \If{$\kbar > 0$}{
     $T,L \longleftarrow$ \textsc{MakeTable}($A, r, c, \kbar$)\;
     \addrowsfromtable($A, 0, r, c, \kbar, T, L$)\;
     \addrowsfromtable($A, r+\kbar, m, c, \kbar, T, L$)\;
   }
   $r,c \longleftarrow r + \kbar, c + \kbar$\;
   \lIf{$k \neq \kbar$}{
     $c \leftarrow c + 1$\;
   }
}
}
\caption{M4RI}
\label{alg:m4ri}
\end{algorithm}

When studying the performance of Algorithm~\ref{alg:m4ri}, we expect the function \maketable\ to contribute most. Instead of performing $\kbar/2 \cdot 2^{\kbar} - 1$ additions \maketable\ only performs $2^{\kbar}-1$ vector additions. However, in practice the fact that $\kbar$ columns are processed in each loop iteration of \addrowsfromtable\ contributes signficiantly due to the better cache locality. Assume the input matrix $A$ does not fit into L2 cache. Gaussian elimination would load a row from memory, clear one column and likely evict that row from cache in order to make room for the next few rows before considering it again for the next column. In the M4RI algorithm more columns are cleared per memory load instruction.

We note that our presentation of M4RI differs somewhat from that in \cite{Ba07}. The key difference is that our variant does not throw an error if it cannot find a pivot within the first $3k$ rows in \gausssubmatrix. Instead, our variant searches all rows and consequently the worst-case complexity is cubic. However, on average for random matrices we expect to find a pivot within $3k$ rows with overwhelming probability \cite[Ch. 9.5.3]{monograph} and thus the average-case complexity is $\ord{n^3/\log n}$.

In the Table~\ref{tab:m4ri}, we give running times for computing the reduced row echelon form (or row echelon form for NTL) for random matrices of various dimensions. The column `M4RI' in both tables refers to our implementation of Algorithm~\ref{alg:m4ri} using implementation techniques from \cite{matmulgf2} such as multiple pre-computation tables. The implementation
that NTL uses is straight-forward Gaussian elimination and thus serves as a baseline for comparison while \Magma implements asymptotically fast matrix elimination.

\begin{table}[htbp]
\begin{footnotesize}
\begin{center}
\begin{tabular}{|c|r|r|r|}
\hline
 \multicolumn{4}{|c|}{64-bit Debian/GNU Linux, 2.33Ghz \CTD}\\
\hline
Matrix & {\Magma} & {NTL} & {M4RI}\\
Dimensions & {2.14-17} & 5.4.2 & 20090105\\
\hline
$10,000 \times 10,000$ &   2.474s &   14.140s &   1.532s\\
$16,384 \times 16,384$ &   7.597s &   67.520s &   6.597s\\
$20,000 \times 20,000$ &  13.151s &  123.700s &  12.031s\\
$32,000 \times 32,000$ &  39.653s &  462.320s &  40.768s\\
$64,000 \times 64,000$ & 251.346s & 2511.540s & 241.017s\\
\hline
\hline
 \multicolumn{4}{|c|}{64-bit Debian/GNU Linux, 2.6Ghz \Opteron}\\
\hline
Matrix & {\Magma} & {NTL} & {M4RI}\\
Dimensions & {2.14-13} & 5.4.2 & 20090105\\
\hline
$10,000 \times 10,000$ &   3.351s &   18.45s &   2.430s\\
$16,384 \times 16,384$ &  11.289s &   72.89s &  10.822s\\
$20,000 \times 20,000$ &  16.734s &  130.46s &  19.978s\\
$32,000 \times 32,000$ &  57.567s &  479.07s &  83.575s\\
$64,000 \times 64,000$ & 373.906s & 2747.41s & 537.900s\\
\hline
\end{tabular}
\caption{RREF for Random Matrices}
\label{tab:m4ri}
\end{center}
\end{footnotesize}
\end{table}

Table~\ref{tab:m4ri} shows that our implementation of the M4RI algorithm (with average-case complexity $\ord{n^3/\log n}$) is competitive with \Magma's asymptotically fast algorithm (with complexity $\ord{n^\omega}$) up to $64,000 \times 64,000$ on the \CTD and up to $16,384 \times 16,384$ on the \Opteron.
\end{document}